\DeclareMathOperator*{\argmax}{arg\,max}
\newtheorem{proposition}{Proposition}
\newdefinition{definition}{Definition}
\newdefinition{assumption}{Assumption}
\newdefinition{remark}{Remark}
\newdefinition{example}{Example}
\newcommand{\vb}{\vspace{3mm}}
\begin{document}

\begin{frontmatter}
\title{An optimization approach to adaptive multi-dimensional capital management}

\author[Korteweg,Rabobank]{G.A.~Delsing\corref{cor}}
\cortext[cor]{Corresponding author. E-mail address: G.A.Delsing@uva.nl}
\author[Korteweg,CWI]{M.R.H.~Mandjes}
\author[Korteweg,Radboud]{P.J.C.~Spreij}
\author[Korteweg,Rabobank]{E.M.M.~Winands}

\address[Korteweg]{Korteweg-de Vries Institute for Mathematics, University of Amsterdam, Science Park 107, 1098 XH Amsterdam, the Netherlands}

\address[Rabobank]{Rabobank, Croeselaan 18, 3521 CB Utrecht, the Netherlands}

\address[CWI]{CWI - National Research Institute of Applied Mathematics and Computer Science, Science Park 123, 1098 XG Amsterdam, the Netherlands}

\address[Radboud]{Radboud University, Heyendaalseweg 135, 6525 AJ Nijmegen, the Netherlands}

\begin{abstract}
Firms should keep capital to offer sufficient protection against the risks they are facing. In the insurance context methods have been developed to determine the minimum capital level required, but less so in the context of firms with multiple business lines including allocation.
The individual capital reserve of each line can be represented by means of classical models, such as the conventional Cram\'{e}r-Lundberg model, but the challenge lies in soundly modelling the correlations between the business lines. We propose a simple yet versatile approach that allows for dependence by introducing a common environmental factor. We present a novel Bayesian approach to calibrate the latent environmental state distribution based on observations concerning the claim processes. The calibration approach is adjusted for an environmental factor that changes over time. The convergence of the calibration procedure towards the true environmental state is deduced.  We then point out how to determine the optimal initial capital of the different business lines under specific constraints on the ruin probability of subsets of business lines. Upon combining the above findings, we have developed an easy-to-implement approach to capital risk management in a multi-dimensional insurance risk model.

\end{abstract}

\begin{keyword}
ruin probability; insurance risk; Bayesian statistics; optimal allocation; multi-dimensional risk process\\
\vspace{\baselineskip}
\textit{JEL classification number}: C690; C220
\end{keyword}

\end{frontmatter}

\section{Introduction}
Firms should keep capital so as to be guaranteed a reasonable degree of protection against the risks they face when conducting their business.
In the insurance industry, procedures to find the minimally needed capital level have received a great deal of attention, reflecting the constraints imposed by insurance regulation. For instance, the European solvency regulation. More specifically,  insurance companies should manage their capital reserve level such that the probability of economic ruin within one year is less than a given threshold. This risk measure, known as {\it Value-at-Risk} (VaR), can thus be considered as the key concept when assessing insurance firms' credit risk vulnerability. The main objective of this paper is to develop a strategy to update the firm's risk reserve, and its allocation across different business lines within the firm.

\vb

The capital surplus required to keep the credit risk of a firm sufficiently low, studied in a branch of research known as {\it ruin theory}, depends on various characteristics including 
the distribution of the claim amounts, their inter-arrival times, and the incoming premiums. The focus of ruin theory is on the time evolution of the capital surplus, with its inherent fluctuations due to amounts claimed and premiums earned. We remark that the capital surplus is also a measure of the risk pertaining to a portfolio, and as a consequence the VaR is a relevant concept in the portfolio management context too. 

A traditional objective of risk theory concerns the determination of the initial capital reserve, say $u$, that guarantees the insurer a sufficient level of solvency. Initially, the focus was on the probability $\phi(u)$ of ultimate ruin, i.e.\ the probability that the capital surplus ever drops below zero given the initial reserve $u$; see the seminal contribution~\cite{lundberg1903approximerad}. Later these results have been extended in many ways, most notably (i)~ruin in finite time, (ii)~more advanced claim arrival processes, (iii)~asymptotics of $\phi(u)$ for $u$ large, and (iv) more realistic premium processes (e.g.\ non-deterministic ones); see e.g.\ \cite{MR2766220} for a detailed account.  

\vb

While most of the existing literature primarily considers a univariate setting (focusing on a single reserve process), in practice firms often have multiple lines of 
business. As a consequence, it is a relevant question how to assign initial reserves to the individual business lines, with the objective to keep the firm's credit risk (now expressed in terms of  the likelihood of the capital surplus of one or more of the business lines dropping below zero) sufficiently low. 
A multi-dimensional risk model is introduced by assigning a risk process to each business line.
The allocation of the initial reserve of the firm to its business lines follows directly from the individual initial reserves in this multi-dimensional risk model.
A complication however is that the individual capital surplus processes are typically highly correlated, as they are affected by common environmental factors (think of the impact of the weather on health insurance and agriculture insurance). 

This paper has several contributions. In the first place we set up a simple yet versatile
multivariate risk model, in which the components are correlated by using a common (but unobserved) environmental factor.  In the second place, we develop a Bayesian technique which facilitates the calibration of the environmental factor by observing the claim processes. For a changing environmental factor, we propose a maximum likelihood calibration method. In the third place, we point out how the above ingredients can be used to set up a procedure for periodically adapting the capital reserves based on new observations from the claim processes.

\vb

We proceed with a few more words on the related literature, and its relation to our work. 
Multivariate risk processes play a prominent role in various studies (see e.g.\ the overview \cite[Ch. XIII9]{MR2766220}), but capturing the corresponding joint ruin probability has proven challenging (see e.g.\   \cite{MR2322127,MR2016771}).
Our work is inspired by earlier work by Loisel {\it et al.} \cite{Loisel2004_unpublished,Loisel2007a_unpublished,Loisel2007b}, which also make use of an environmental factor. The main difference is that Loisel assumes a Markov  environmental state factor, whereas in our Bayesian setup the objective is to track the unobservable environmental state. As a fixed environmental state is not realistic over longer time intervals, we point out how to adapt the calibration procedure to detect a change in the environment. Knowing the environmental state, we can compute (or approximate) the ruin probabilities for any given initial capital reserve, which enables the selection of appropriate initial levels. Our procedure also includes a provably converging  Bayesian calibration; recall that the environmental state cannot be observed. In this respect we note that we found only few contributions on this topic that also cover the calibration; an example is \cite{MR2949452}, but the Bayesian updating approach that is proposed there focuses on a single insurer only.
When the environmental state factor is re-sampled each time period, the calibration method has to be adjusted to a maximum likelihood approach in order to achieve convergence towards the distribution.

Most actuarial and financial literature on the topic of capital allocation; for example \cite{LG2004}, focuses on the subdivision of an exogenously given amount of capital for the entire firm over its business lines. Our work differs from this traditional problem of capital allocation within a firm by minimizing the sum of the initial reserves of its business lines. The initial reserve of the firm as well as the initial reserves of its business lines therefore follow directly from this multi-dimensional model and no additional capital allocation procedure is required.
Again, in line with our earlier remark, in this paper the focus is on an insurance context, 
but the framework developed has various other evident applications.
A similar procedure may, for example, be adopted in banking. Banks have some fixed income streams such as interest rate payments on mortgages and loans and the outgoing claims may represent counterparty defaults. In this setting the ruin model can be used to assess credit risk for the portfolio of a bank.

\vb

This paper is organized as follows. Section \ref{sec_model} presents the model and preliminaries. It defines the risk process for each individual business line, and characterizes the finite time probability of ruin in case the environmental factor (and thus also the claim inter-arrival and claim-size distribution) is known. We then present the multivariate insurance model by introducing the environmental dependence.
The section concludes by developing a procedure to allocate capital to the individual business lines under a constraint on the VaR, which is achieved by periodically adapting the capital reserves. 
Section \ref{sec_calibration} introduces a calibration approach for the multivariate risk process of Section \ref{sec_model}, which is geared towards learning the environmental factor based on the claim processes. A Bayesian updating approach is presented for the environmental state factor which does not change (drastically) over time. For an environmental state factor that is re-sampled each observation period from a discrete distribution, we propose a maximum likelihood approach to calibrate the distribution. 
Numerical examples of the capital allocation and calibration approach of the multi-dimensional risk process are given in Section \ref{sec_examples}, including the use of Arfwedson's approximation of the probability of ruin in case there is no explicit solution available. 
Section \ref{sec_conclusion} concludes this paper, and discusses possible extensions of the model.

\section{A multivariate risk model}\label{sec_model}
As pointed out in the introduction, our main objective is to set up a procedure that guarantees a business to stay solvent with a certain degree of confidence over a time horizon $T$ (say). This we achieve by periodically adapting the risk reserves of the business lines. To manage the process, we therefore need a procedure to compute the probability that, given a certain initial reserve level, one or more of the reserve processes drops below $0$ before a specified time $T$. We assume no impact of insolvency of one business line on the others.
Each line of business is free of expenses, taxes and commissions. 
For each of the business lines, there is some initial capital reserve, increase due to premiums (that come in at a fixed rate per unit time), and decrease due to claims. 

We use a multi-dimensional variant of the classical Cram\'{e}r-Lundberg model with $n\in{\mathbb N}$ business lines. Let us now define the dynamics of the capital surplus $X_i(\cdot)$ of business line $i$. There is a constant premium rate $r_i\geq 0$ per unit time. The number of claims arriving in $[0,t]$, denoted by $N_i(t)$, is a Poisson process with parameter $\lambda_i$. The claim sizes $C^i_k$ form a sequences of i.i.d. random variables distributed as random variable $C^i$, with moment generating function $\hat{B}_i[s]$ and distribution function $F_i$. It means that the capital surplus process $X_i(t)$ for business line $i$ is given by
 \begin{equation}\label{eq_ruin}
X_i(t):=u_i+r_i t - \sum_{k=1}^{N_i(t)}C^i_k,
\end{equation}
where $u_i\geq 0$ denotes the initial capital reserve. 
The probability of ruin of business line $i$ before time $T$ is given by
\[\phi_i(u_i,T):=\mathbb{P}\left(\inf_{t\in[0,T]}X_i(t)<0\,\big|\,X_i(0)=u_i\right).\]
In Section \ref{subsec_fixedmodel} we assume that $r_i$, $\lambda_i$ and $F_i$ are given; later, in Section \ref{subsec_dependence}, we introduce a mechanism in which they are randomly selected (in a specific coordinated manner), thus rendering the processes $X_i(\cdot)$ dependent.

\subsection{Model under fixed parameter setting and no dependence}\label{subsec_fixedmodel}
In this section, $r_i$, $\lambda_i$ and $F_i$ are given. In addition, for now the business lines are assumed independent. 
Following classical ruin theory we denote \[\kappa_{i}(s):=\lambda_{i}\left(\hat{B}_i[s]-1\right)-r_i s.\] 
This function is strictly convex (easily deduced by the definition of a moment generating function). Under the {\it net profit condition} $\kappa_{i}'(0)=\lambda_{i}\mathbb{E}[C^i]-r_i<0$ (and a mild regularity assumption: $\kappa_{i}(s)$ should not jump from a value below 0 to $\infty$), it can be shown that a unique positive root $\gamma_i$ of $\kappa_{i}(s)=0$ exists. This root plays a crucial role in Arfwedson's approximation of $\phi_i(u_i,T)$ \cite{MR0074725}; see Appendix \ref{appA}.
 For some specific claim size distributions, the probability of ruin $\phi_i(u_i,T)$ can be explicitly calculated. The proposition below concerns the case of exponentially distributed claims.
 
\begin{proposition}\label{prop_exp} Assume $C^{i}\sim {\rm \exp}(\theta_i)$. Then,
$$\phi_i(u_i,T)=\begin{cases}
 {\displaystyle  \frac{\lambda_{i}}{\theta_i r_i}} \exp\Bigg\{-\left(\theta_i-{\displaystyle \frac{\lambda_{i}}{r_i}}\right)u_i\Bigg\}-{\displaystyle \frac{1}{\pi}}{\displaystyle \int_0^\pi} {\displaystyle \frac{f_1(\mu)f_2(\mu)}{f_3(\mu)}}{\rm d}\mu, & \text{for } \theta_i r_i>\lambda_{i}\\
  &\vspace{-3mm}\\
  1-{\displaystyle \frac{1}{\pi}}{\displaystyle \int_0^\pi} {\displaystyle \frac{f_1(\mu)f_2(\mu)}{f_3(\mu)}}{\rm d}\mu, &\text{for } \theta_i r_i\leq\lambda_{i}
\end{cases}$$
where
\begin{align*}
f_1(\mu)&=\frac{\lambda_{i}}{\theta_i r_i}\exp\Bigg\{2 T\sqrt{\theta_i r_i\lambda_{i}}\cos\mu-\left(r_i\theta_i+\lambda_{i}\right)T+u_i\theta_i\left(\frac{\sqrt{\lambda_{i}}}{\sqrt{r_i\theta_i}}\cos\mu-1\right)\Bigg\},\\
f_2(\mu)&=\cos\left(\frac{u_i\sqrt{\theta_i\lambda_{i}}}{\sqrt{r_i}}\sin\mu\right)-\cos\left(\frac{u_i\sqrt{\theta_i\lambda_{i}}}{\sqrt{r_i}}\sin\mu+2\mu\right),\\
f_3(\mu)&=1+\frac{\lambda_{i}}{\theta_i r_i}-2\frac{\sqrt{\lambda_{i}}}{\sqrt{\theta_i r_i}}\cos\mu.
\end{align*}
\end{proposition}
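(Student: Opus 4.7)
My plan is to derive the formula by Laplace-transform inversion, exploiting the memoryless property of the exponential claim sizes. First I would set up a first-passage equation for the non-ruin probability $\psi_i(u,t):=1-\phi_i(u,t)$ by conditioning on the time and size of the first claim. For $C^i\sim\exp(\theta_i)$, this leads to an integro-differential equation that, after differentiating in $u$ (which removes the convolution using the exponential kernel), reduces to a linear PDE in $(u,t)$ with constant coefficients. Alternatively — and perhaps more cleanly — I would invoke the Kolmogorov backward equation for the Markov process $X_i(\cdot)$: for $u>0$,
\[
\partial_t \psi_i(u,t) = r_i\,\partial_u \psi_i(u,t) + \lambda_i\int_0^u \theta_i e^{-\theta_i y}\psi_i(u-y,t)\,{\rm d}y - \lambda_i\psi_i(u,t),
\]
with boundary condition $\psi_i(0,t)=0$ if $r_i=0$, and the absorbing behaviour below zero built into the definition of $\psi_i$.

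Next I would take the Laplace transform in $t$, $\tilde\psi_i(u,s):=\int_0^\infty e^{-st}\psi_i(u,t)\,{\rm d}t$, and also eliminate the convolution in $u$ by multiplying the equation by $\theta_i e^{-\theta_i u}$ and differentiating appropriately, or equivalently by taking the double Laplace transform in $(u,t)$. For exponential claims this yields an explicit rational expression; solving back for $\tilde\phi_i(u,s)=1/s-\tilde\psi_i(u,s)$ gives a closed form of the shape
\[
\tilde\phi_i(u,s)=\frac{A(s)}{B(s)}\exp\bigl\{-\rho(s)\,u\bigr\},
\]
where $\rho(s)$ is the positive root of the characteristic equation $r_i\rho^2-(r_i\theta_i+\lambda_i+s)\rho+s\theta_i=0$. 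The discriminant of this quadratic is $(s+r_i\theta_i+\lambda_i)^2-4sr_i\theta_i$, which is what ultimately produces the $\sqrt{\lambda_i r_i\theta_i}\cos\mu$ terms after inversion.

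The core step is then Bromwich inversion of $\tilde\phi_i(u,s)$. I would deform the Bromwich contour around the branch cut where the discriminant is negative, namely $s\in[-(\sqrt{r_i\theta_i}+\sqrt{\lambda_i})^2,\,-(\sqrt{r_i\theta_i}-\sqrt{\lambda_i})^2]$. Parametrising this cut via $s=-(r_i\theta_i+\lambda_i)+2\sqrt{r_i\theta_i\lambda_i}\cos\mu$ for $\mu\in[0,\pi]$ translates the branch-cut contribution into the real integral $\frac{1}{\pi}\int_0^\pi\frac{f_1(\mu)f_2(\mu)}{f_3(\mu)}{\rm d}\mu$, with $f_1$ carrying the exponential factor $e^{sT}e^{-\rho(s)u}$, $f_2$ coming from the imaginary parts of $A(s)/B(s)\cdot e^{-\rho(s)u}$ on the two sides of the cut, and $f_3$ from the real denominator $|B(s)|^2$ after combining the two banks.

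Finally, I would split into the two cases via the location of the pole of $\tilde\phi_i(u,s)$ relative to the contour. When $\theta_i r_i>\lambda_i$ there is an isolated pole at some $s^*<0$ to the right of the cut whose residue equals $\frac{\lambda_i}{\theta_i r_i}\exp\{-(\theta_i-\lambda_i/r_i)u\}$; this is precisely the classical infinite-horizon ruin probability, and the finite-horizon answer is this leading term \emph{minus} the cut integral. When $\theta_i r_i\le\lambda_i$ the drift is non-positive, ruin is certain as $T\to\infty$, the ``residue'' contribution becomes $1$, and we are left with $1$ minus the same cut integral. The main obstacle I anticipate is bookkeeping in the contour deformation: verifying that the only singularities inside the deformed contour are the claimed pole (and showing it is absent in the second case because it migrates to the wrong side of the cut), and carefully combining the two sides of the branch to identify the integrand with exactly $f_1,f_2,f_3$; the rest is algebraic simplification.
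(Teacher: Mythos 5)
Your plan is sound in substance, but it is a genuinely different route from the paper's. The paper does not derive the formula at all: it quotes the known explicit finite-horizon result of Barndorff-Nielsen and Schmidli for the normalized case $\theta_i=r_i=1$ and then obtains the general statement purely via the scaling identities $\phi_{i,\lambda_i,\theta_i}(u_i,T)=\phi_{i,\lambda_i/\theta_i,1}(\theta_i u_i,\theta_i T)$ and $\phi_{i,\lambda_i,r_i}(u_i,T)=\phi_{i,\lambda_i/r_i,1}(u_i,r_iT)$; that buys a two-line proof, while your backward-equation/Laplace-transform/Bromwich-contour derivation is essentially a self-contained reconstruction of the cited result, which explains where $f_1,f_2,f_3$ and the two regimes actually come from at the cost of much more work. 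Two details in your sketch need repair before it closes, though neither is fatal to the approach. First, your characteristic quadratic has a sign slip: eliminating the convolution from the transformed equation gives, for solutions $e^{\rho u}$, the quadratic $r_i\rho^2+(r_i\theta_i-\lambda_i-s)\rho-\theta_i s=0$, whose discriminant is $(s+\lambda_i+r_i\theta_i)^2-4\lambda_i r_i\theta_i$, not $(s+r_i\theta_i+\lambda_i)^2-4sr_i\theta_i$; it is the former that vanishes exactly at the endpoints of the cut you (correctly) state and that turns into $-4\lambda_i r_i\theta_i\sin^2\mu$ under your parametrization. Second, there is no isolated pole at some $s^*<0$: solving the boundary condition at $u=0$ yields $\tilde\phi_i(u,s)=\frac{\lambda_i}{s\left(s+\lambda_i-r_i\rho_-(s)\right)}e^{\rho_-(s)u}$ with $\rho_-(s)$ the decaying root, the bracket never vanishes, and the only pole sits at $s=0$; its residue is $\phi_i(u_i,\infty)$, which equals $\frac{\lambda_i}{\theta_i r_i}e^{-(\theta_i-\lambda_i/r_i)u_i}$ when $\theta_i r_i>\lambda_i$ and $1$ otherwise, the switch being driven by the behaviour of $\rho_-(s)$ as $s\downarrow 0$ rather than by a pole migrating across the cut. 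With those corrections, the contour bookkeeping you anticipate does produce exactly the claimed expression.
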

\begin{proof}
The proof follows from Barndorff-Nielsen and Schmidli \cite{MR1366823} and the observation that the case $\theta_i \neq 1$ can be deduced from the case $\theta_i=1$ via 
\[\phi_{i,\lambda_{i},\theta_i}(u_i,T)=\phi_{i,\lambda_{i}/\theta_i,1}(\theta_i u_i,\theta_iT).\]
The case  $r_i\neq 1$ follows from \[\phi_{i,\lambda_{i},r_i}(u_i,T)=\phi_{i,\lambda_{i}/r_i,1}(u_i,r_iT).\]
This proves the claim.
\end{proof}

Denote $S_1$ up to $S_M$ as specific subsets of the $n$ business lines, for $m\in \mathbb{N}$. We focus on the probability of ruin of all business lines within subset $S_m$.
As the business lines are (for now) assumed independent, 
\begin{equation}\label{eq_pi}
\pi_{m}(u,T):=\mathbb{P}\left(\sup_{i\in S_m}\inf_{t\in[0,T]} X_i(t)<0\Big| X(0)=u\right)=\prod_{i\in S_m}\phi_i(u_i,T).
\end{equation}
Likewise, we could consider the probability of at least one defaulting business line within a subset:
\begin{equation}\label{eq_pi2}
\bar{\pi}_{m}(u,T):=1-\mathbb{P}\left(\inf_{i\in S_m}\inf_{t\in[0,T]} X_i(t)>0\Big| X(0)=u\right)=1-\prod_{i\in S_m}(1-\phi_i(u_i,T)).
\end{equation}
Even though we assumed independence between the different business lines, there can be dependence across the subsets $S_m$ when a business line is contained in multiple sets $S_m$.

\subsection{Environmental Dependence}
\label{subsec_dependence}
We now point out how we can make the processes $X_i(\cdot)$ dependent by working with a common  environmental factor affecting all business lines (think, for example, of the weather impacting the claim process of health-related business lines, but also of business lines related to the agricultural sector). 
{\it Conditional on}  the state of the environment, the multivariate claim process is modelled as the $n$-dimensional process $X_1(\cdot),\ldots,X_n(\cdot)$ defined in the previous subsection; in particular, they are conditionally independent. 

In more concrete terms, our process is defined as follows.  
The environment state, denoted by $P$, is a random variable with support $A=\{1,...,J\}$ (and corresponding probabilities $p_j, \ j\in A$). 
If $P=j$, then the claim arrival rate of business line $i$ is $\lambda_{ij}$, and the claims of business line $i$ are distributed as a random variable $C^{ij}$ (and distribution function $F_{ij}$). 
Conditional on the environmental state, the $X_i(\cdot)$ are independent, so that Equation \eqref{eq_pi} becomes
\[\pi_{m}(u,T)=\sum_{j=1}^J p_j\prod_{i\in S_m}\phi^j_i(u_i,T),\]
with
\[\phi_i^j(u_i,T):=\mathbb{P}\left(\inf_{t\in[0,T]}X_i(t)<0\,\Big|\,X_i(0)=u_i,P=j\right).\]
Here the $\phi_i^j(u_i,T)$ are as the $\phi_i(u_i,T)$ that we defined before, but now with the $\lambda_{ij}$ and $F_{ij}$ being used.

\subsection{Optimal Capital Reserve Allocation}\label{subsec_allocation}
In this subsection we further detail our objective: finding appropriate values of the initial reserves $u_1,\ldots,u_n$, such 
that a VaR-type risk measure remains below some maximally allowed value. 

For a univariate risk process, say that of business line $i$, the conventional setting is that the minimal initial reserve $u_i$ is determined such that the probability of ruin over a specified time horizon remains below a given $\delta\in(0,1)$. We now extend this to the multivariate risk setting introduced above, by considering the ruin probabilities of the specific subsets $S_1,...,S_M$ . 
For $\delta_m\in(0,1)$ (with $m=1,\ldots,M$)  we focus on the optimization problem
\begin{equation}\label{risk_measure}
\min_{u\succeq 0}\sum_{i=1}^n u_i, \ \ \ {\rm subject\:\: to} \ \ \ \pi_{m}(u,T)\leq \delta_m, \:\:\ m=1,...,M,
\end{equation}
where $u\succeq 0$ means that the vector $u$ is component-wise positive.
Evidently other objectives can be chosen, such as constraints on the probability $\bar{\pi}_m$ or
\begin{equation}\label{risk_measure2}
\min_{u\succeq 0}\sum_{i=1}^n u_i,  \ \ \ {\rm subject\:\: to} \ \ \ \mathbb{P}\left(\inf_{t\in[0,T]}\sum_{i=1}^n X_i(t)<0\Big| X(0)=u\right)\leq \delta;
\end{equation}
these can be dealt with in a similar way. 

The environmental state is not observed, so that that the calibration is not straightforward. 
We develop an easy-to-implement Bayesian updating procedure that is  based on the observed claim processes (corresponding to the various business lines). For an environmental state factor that is re-sampled each time period, we propose a maximum likelihood calibration approach. 
Evidently, these procedures should be such that the estimates of the state probabilities $p_j$ can be  updated on a regular basis. The next section presents our approaches.

\section{Detection of the environmental state}\label{sec_calibration}
This section provides an easy-to-implement calibration approaches to path-wise track the unobservable environmental state based on observed claims. It is assumed that the claim intensities and claim size distributions are known. Let $t_0=0<t_1<...<t_M$ and denote $\bar{t}_m:=t_m-t_{m-1}$. The number of claims $Y_i^m:=N_i(t_m)-N_i(t_{m-1})$ and the sequence of claim sizes $Z_i^m=(C_{1}^i,...,C_{Y_i^m}^i)$ during time interval $(t_{m-1},t_{m}]$ are observed for each business line $i$. 
We introduce the notation $\mathcal{Y}^m:=\{Y^1,...,Y^m\}$ and $\mathcal{Z}^m:=\{Z^1,...,Z^m\}$, where $Y^m:=(Y_1^m,...,Y_n^m)$ and $Z^m:=(Z_1^m,...,Z_n^m)$ denote the vectors containing the number of claims and claim sizes for all business lines during time interval $(t_{m-1},t_m]$, respectively. With slight abuse of notation we use the generic notation $f$ to denote the (joint) density of any random quantity. For instance, $f(\mathcal{Z}^m,\mathcal{Y}^m)$ denotes the joint density of the number of observed claims and claim sizes up to time $t_m$.

In Section \ref{subsec_stationary} we assume that the environmental state factor $P$ is fixed over time; later, in Section \ref{subsec_non_stationary}, we introduce a calibration approach for an environmental state that is randomly selected each observation period.
\subsection{Bayesian calibration for time-independent environmental state}\label{subsec_stationary}
In this section the environmental state random variable $P$ is considered independent of time and therefore is not subject to change over time. The environmental state probabilities $p_j$ are estimated as the posterior distribution 
based on the observed claims after some time $t_m$ (say):

\begin{align}
\hat{p}_j^{m}:&=\mathbb{P}(P=j|\mathcal{Y}^m,\mathcal{Z}^m)\nonumber \\
&=\frac{\hat{p}^{0}_{j}f(\mathcal{Y}^m,\mathcal{Z}^m|P=j)}{\sum_{k=1}^J \hat{p}^{0}_{k}f(\mathcal{Y}^m,\mathcal{Z}^m|P=k)},\label{eq_Bayes}
\end{align}
where $\hat{p}^{0}_{j}\in(0,1)$ denote the prior probabilities which can be chosen arbitrarily such that $\sum_{j=1}^J\hat{p}^{0}_{j}=1$.
If we furthermore assume that the observations in each time period are independent, the environmental state probabilities can be estimated iteratively using \eqref{eq_Bayes}:

\begin{align}
\hat{p}_j^{m}:&=\frac{\hat{p}^{0}_{j}f(\mathcal{Y}^{m-1},\mathcal{Z}^{m-1}|P=j)f(Y^m,Z^{m}|P=j)}{\sum_{k=1}^J \hat{p}^{0}_{k}f(\mathcal{Y}^{m-1},\mathcal{Z}^{m-1}|P=k)f(Y^m,Z^{m}|P=k)} \nonumber\\
&=\hat{p}_j^{m-1}f(Y^m,Z^{m}|P=j)\frac{\sum_{l=1}^J \hat{p}^{0}_{l}f(\mathcal{Y}^{m-1},\mathcal{Z}^{m-1}|P=l)}{\sum_{k=1}^J \hat{p}^{0}_{k}f(\mathcal{Y}^{m-1},\mathcal{Z}^{m-1}|P=k)f(Y^m,Z^{m}|P=k)}\nonumber \\
&=\hat{p}_j^{m-1}\frac{f(Y^m,Z^{m}|P=j)}{\sum_{k=1}^J \hat{p}^{m-1}_{k}f(Y^m,Z^{m}|P=k)}\label{eq_Bayes_Iter}.
\end{align}

\begin{example}\label{ex_exp}
Consider the instance in which only the arrival intensities of the claim processes are dependent on the state of the environment. Conditional on the environmental state, the arrival intensity is fixed, i.e.\ $\mathbb{P}(\lambda_i=\lambda_{ij}\mid P=j)=1$. Note that, conditional on the environmental state, the claims processes are independent.

Using \eqref{eq_Bayes_Iter} we find in this case:
\begin{align*}
\hat{p}_j^{m}:=\mathbb{P}(P=j|\mathcal{Y}^m)&=\frac{\hat{p}^{0}_{j}\mathbb{P}(\mathcal{Y}^m|P=j)}{\sum_{k=1}^J \hat{p}^{0}_{k}\mathbb{P}(\mathcal{Y}^m|P=k)}\\
&=\hat{p}^{m-1}_j\frac{\mathbb{P}(Y^m= y^m|P=j)}{\mathbb{P}(Y^m=y^m)}\\
&=\frac{\hat{p}^{m-1}_{j}\prod_{i=1}^n e^{-\lambda_{ij}\bar{t}_m}\lambda_{ij}^{y^m_i}}{\sum_{k=1}^J \hat{p}^{m-1}_{k}\prod_{i=1}^n e^{-\lambda_{ik}\bar{t}_m}\lambda_{ik}^{y^m_i}}.
\end{align*}

Next, we include the influence of the environmental state on the claim size distribution assuming exponentially distributed claims with rate $\theta_i$. Conditional on the environmental state the rate is fixed i.e.\ $\mathbb{P}(\theta_i=\theta_{ij}\mid P=j)=1$. This gives:

\begin{align*}
\hat{p}^{m}_j=&\frac{\hat{p}^{m-1}_{j}\prod_{i=1}^n e^{-\lambda_{ij}\bar{t}_m}\lambda_{ij}^{y^m_i}f(Z_i^m=z_i^m|Y_i^m=y_i^m,P=j)}{\sum_{k=1}^J \hat{p}^{m-1}_{k}\prod_{i=1}^n e^{-\lambda_{ik}\bar{t}_m}\lambda_{ik}^{y^m_i}f(Z_i^m=z_i^m|Y_i^m=y_i^m,P=k)}\\
=&\hat{p}^{m-1}_{j}\frac{\prod_{i=1}^n e^{-\lambda_{ij}\bar{t}_m}(\lambda_{ij}\theta_{ij})^{y^m_i}e^{-\theta_{ij}\sum_{l=1}^{y_i^m} z_{il}^m}}{\sum_{k=1}^J \hat{p}^{m-1}_{k}\prod_{i=1}^n e^{-\lambda_{ik}\bar{t}_m}(\lambda_{ik}\theta_{ik})^{y^m_i}e^{-\theta_{ik} \sum_{l=1}^{y^m_i}z_{il}^m}}.
\end{align*}

Note that this procedure only requires the \textit{total} claim size over a time period for each business line, i.e.\ $\sum_{l=1}^{y_i^m}z_{il}^m$.
\qed
\end{example}

The estimated probability distribution of the environmental factor characterised by the probabilities $\hat{p}_j^m$ is denoted by $\hat{P}^m$.
After every time interval $(t_{m-1},t_m]$ the Bayesian procedure described by formula \eqref{eq_Bayes_Iter} allows for an update of the estimated probability distribution of $P$ based on observed claims during the time interval. As a result capital reserves can be recalculated based on this new estimation.

The estimation of the distribution retrieved from the Bayesian updating procedure, $\hat{P}^m$ converges in probability towards the true distribution of $P$ as $m$ goes to infinity. 
This result follows from Ghosal et al.\ \cite{ghosal2000}, Theorem 5.1. 
To retrieve the true environmental state factor, it is important that the model is identifiable, i.e. different parameter values correspond to different distributions of processes $X_i$. Example \ref{ex_Loisel2} in Section \ref{sec_examples} shows what happens in case this condition is not satisfied.

\subsection{Maximum likelihood calibration approach for environmental state dependence under re-sampling}\label{subsec_non_stationary}
The previous subsection provided a calibration approach for an environmental state that is assumed not to be subject to change over time. In practice, environmental influence and dependence can rarely be considered fixed over time. The objective of this subsection is to outline a calibration procedure to estimate the environmental state probabilities $p_j$ from observed claims in case the environmental state factor is re-sampled each observation period at random: during observation period $(t_0,t_1]$ the environmental state is then $P_1\in A$, throughout $(t_1,t_2]$ the environmental state factor is $P_2\in A$, etc. In this instance, the observed claims and claims sizes $Y^m, Z^m$ have a (potentially) different underlying environmental state factor for different observation periods such that the Bayesian calibration approach outlined in the previous section has to be adjusted (formula \eqref{eq_Bayes} and \eqref{eq_Bayes_Iter} have to be adjusted) in order to retrieve the distribution of the environmental state factor.

A maximum likelihood approach is adopted to retrieve an estimate of the distribution probabilities $p_j$. Define the maximum likelihood environmental state over observation period $\bar{t}_m$ as:
\begin{align*}
\hat{J}_m:&=\argmax_{j\in \{1,...,J\}} f( Y^m= y^m,Z^m= z^m|P=j)\\
&=\argmax_{j\in J}\prod_{i=1}^n f( Y_i^m= y_i^m,Z_i^m= z_i^m|P=j).
\end{align*}

The probabilities $p_1,..,p_J$ can be estimated after $t_m$ by
$$\hat{p}^{m}_i:=\frac{1}{m}\sum_{k=1}^{m} \mathbbm{1}_{\hat{J}_k=i}.$$

Consistency of the maximum likelihood estimates, i.e.\ convergence in probability of the estimates $\hat{p}^{m}_i$ towards to the true probabilities $p_i$ when $m\rightarrow\infty$, has been shown to hold under specific conditions. One of these conditions concerns the identification of the model, to make sure that different parameter values necessarily correspond to different distributions. The remaining conditions are more technical conditions on the probabilities $p_i$ and the likelihood function $f( Y^m= y^m,Z^m= z^m|P=j)$ and are generally satisfied in practice.
We refer, e.g., to Section 5.5 in \cite{Vaart} for a detailed technical analysis of the consistency conditions.

The environmental factor distribution can be estimated iteratively:
$$\hat{p}^{m}_i=\frac{m-1}{m}\hat{p}^{m-1}_i+ \frac{1}{m}\mathbbm{1}_{\hat{J}_m=i}, \ \ \ \forall i\in\{1,...,J\}.$$

\vb 

We have implemented the above calibration procedure for various examples and obtained initial capital reserves $u_1,\ldots,u_n$ by solving the optimization problems presented in Section \ref{subsec_allocation}. The next section presents the results.

\section{Numerical Results}\label{sec_examples}
In this section, we do not only elaborate on the applicability of the capital updating procedures derived in Sections \ref{subsec_stationary} and \ref{subsec_non_stationary} but also discuss possible extensions. 

The examples in this section are inspired by the numerical setup used by Loisel \cite{MR2145483}. We present four examples highlighting different features of the capital updating procedure and the impact on the calibration of the environmental state factor and optimized allocated initial capital reserves. The examples of increasing complexity cover:

\begin{enumerate}[(i)]
\item the volatility and range of the calibration (and optimization) method with respect to the simulation setting;
\item different parameter sets;
\item a changing environmental state factor; and
\item different claim size distributions.
\end{enumerate}
 
All computations were done in R using an implementation (nmkb) of the Nelder-Mead algorithm for the optimization procedure. Nelder-Mead uses function values only, is robust and known to work well for non-differentiable functions. Numerical integrals are evaluated using Simpson’s adaptive quadrature method.

\begin{example}\textit{(Volatility of the calibration and optimization method)}\label{ex_Loisel}

This example illustrates the volatility and range of the estimated environmental state factors as well as the allocated initial capital reserves. We do so by simulating the process (in every run with a different simulation seed). Each run results in a path for the estimated environmental state factor over time. Comparing the output of the different runs, we observe that the convergence of the calibration approach is not dependent on the simulation seed. We note that this example is the only example in this paper that comprises of multiple simulation runs.

Similar to the setting introduced by Loisel \cite{MR2145483}, we consider a business model consisting of two lines of business $(n=2)$ and identify three different states of the economy $(J=3)$. We assume the claims to be exponentially distributed $C^i\sim \exp(\theta_i)$, similar to Example \ref{ex_exp}, and take premium rates $r_1=r_2=1$. In this first example of this section the environmental state does not change over time.
Initially, we also assume $\theta_i=1$ independent of the environmental state. 
The influence of the environmental state factor on the second business line, through the intensity of the claims process, is kept constant using $\lambda_{21}=\lambda_{22}=\lambda_{23}=0.6$. The optimal allocation therefore strongly depends on the claim intensity parameter of the first business line. In the first state of the environment, $\lambda_{11}=0.5<\lambda_{21}$, line of business 1 is safer than line 2 by comparison of the claim intensities and therefore should result in greater capital reserves for the second line of business. In the second and third environmental state, business line 1 is more risky with intensities $\lambda_{12}=0.7$ and $\lambda_{13}=0.92$, respectively. We denote this parametrization of the claim intensities by $\lambda=(0.5,0.6,0.7,0.6,0.92,0.6)$.

The environmental state factor is estimated using the Bayesian calibration method presented in Section \ref{subsec_stationary}. 
The mean and sampled confidence interval of the error of $\hat{p}^m$ with respect to the real environmental factor $P=1$ (i.e.\ $(p_1,p_2,p_3)=(1,0,0)$) are presented in Figure \ref{fig_Errors_sim_Loisel}(a) for $\bar{t}_m=1$ and prior distribution $\hat{p}^0=\left(\frac{1}{3},\frac{1}{3},\frac{1}{3}\right)$.

\vb

\begin{figure}[H]
\centering
      \includegraphics[width=1.0\textwidth]{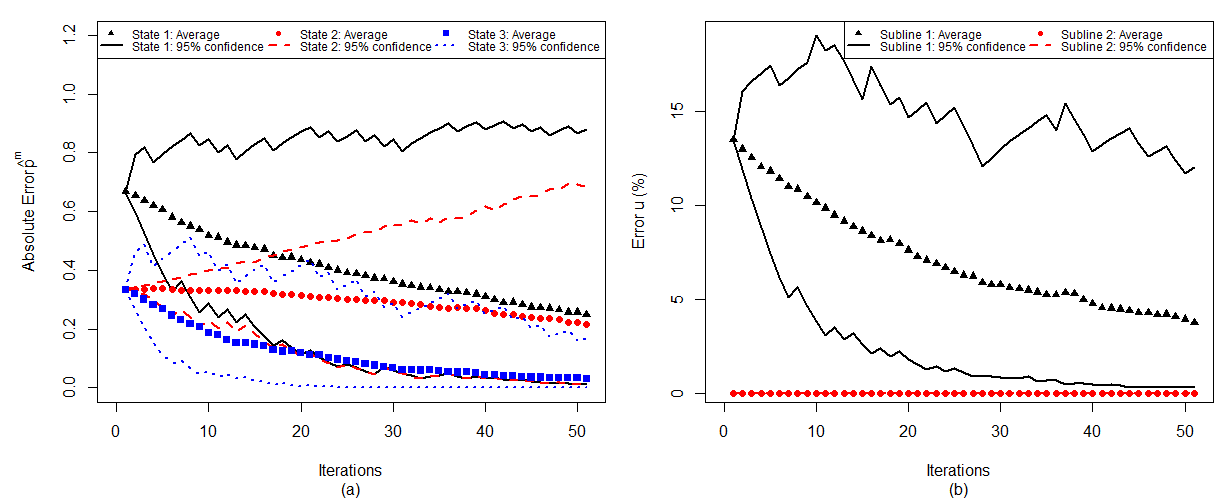}
  \caption{Mean and 95\% confidence interval of absolute error of estimate environmental state factor probabilities $\hat{p}^m=(\hat{p}_1^m,\hat{p}_2^m,\hat{p}_3^m)$ with respect to the true environmental state $P=1$ and relative error of allocated $\hat{u}$ with $\mu=(1,1,1,1,1,1)$ and $\lambda=(0.5,0.6,0.7,0.6,0.92,0.6)$ for 100 trials.}
\label{fig_Errors_sim_Loisel}
\end{figure}

Reserves are allocated by solving minimization problem (\ref{risk_measure}) using $\delta_m=0.001^{|S_m|}$ and fixed $T=1$. We chose $\delta_m=0.001$, which is in the same order of magnitude as the insurance and banking capital regulation thresholds, which use a 0.5\% and 0.1\% confidence level, respectively. In Figure \ref{fig_Errors_sim_Loisel}(b) the mean error using the estimated environmental state probabilities $(\hat{p}_1,\hat{p}_2,\hat{p}_3)$ has been plotted as a fraction of the allocated capital reserves using the true environmental state factor $P=1$. We refer to this as the ``error" of $u$. The figure also shows the 95\% confidence range of the allocated reserves (for 100 optimization runs). As there is no influence of the environmental factor on the reserve process of line 2, we observe no impact on the capital allocation for this business line. 

This example shows the convergence of the Bayesian calibration approach towards the true environmental state and the convergence of the allocated initial capital reserves towards the optimal capital reserves for both business lines. This convergence holds for every random sample.
\end{example}

\begin{example}\textit{(Different parameter sets)}\label{ex_Loisel2}

This example extends the previous example by varying the claim arrival intensity and claim size parameters to determine the impact of these parameters. Increased influence of the environmental state factor on the claims arrival intensity and size, results in faster convergence.
Figures \ref{fig_p_par} and \ref{fig_U_par} show the estimates $\hat{p}^m$ and allocated initial reserves for various sets of intensity and claim size parameters. (The same random seed has been used for the different parameters sets to ensure for a fair comparison.)

With respect to the previous example, the present example also includes dependence of the claim size distribution on the environmental state factor. As the graphs illustrate, this results in faster convergence.
The stronger the dependence of the capital reserve process on the environmental state (through the claim intensity as well as the claim size), the more sensitive the capital allocation. 
When the environmental states have the same impact  on the claim intensity and size, the environmental states are essentially indistinguishable in the model, which can be observed in \ref{fig_p_par}(d) and \ref{fig_U_par}(d). This is a violation of the identification condition for the convergence of the calibration procedure.

\begin{figure}[H]
\centering
      \includegraphics[width=1.0\textwidth,height=400pt]{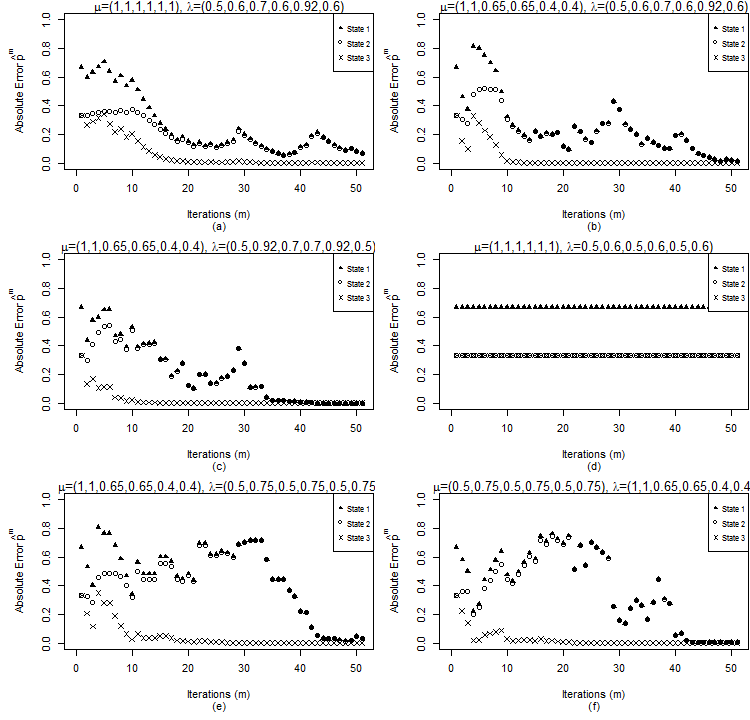}
  \caption{Absolute error of estimate environmental process probabilities $\hat{p}^m=(\hat{p}_1^m,\hat{p}_2^m,\hat{p}_3^m)$ with respect to the true environmental state factor $P=1$ for various parameter sets.}
\label{fig_p_par}
\end{figure}
\newpage
\begin{figure}[H]
\centering
      \includegraphics[width=1.0\textwidth,height=400pt]{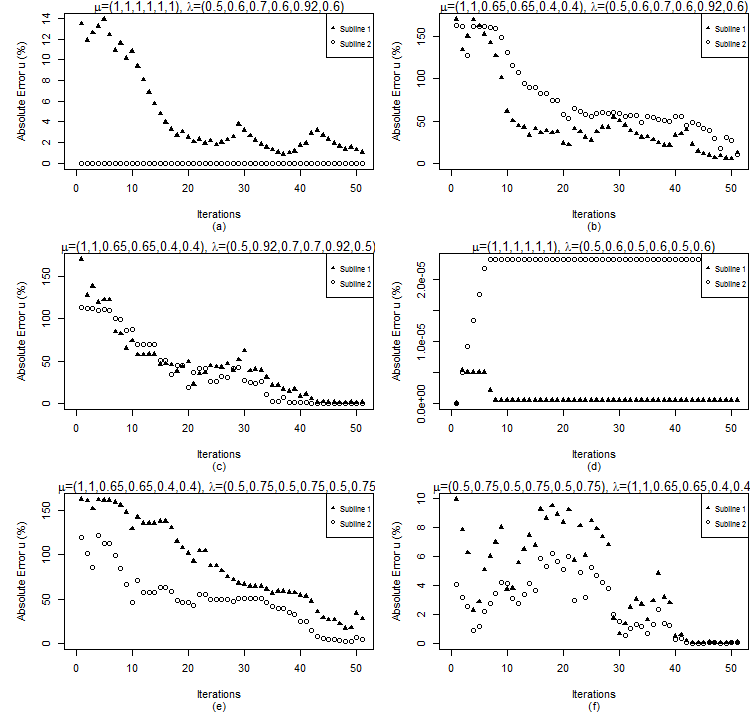}
  \caption{Relative error of allocated capital reserve $\hat{u}$ using the estimated environmental state distribution $\hat{p}^m$ with respect to the allocated capital reserve using the true environmental state factor $P=1$ for various parameter sets.}
\label{fig_U_par}
\end{figure}
\end{example}

\begin{example}\textit{(Changing environmental state factor)}\label{ex_Loisel3}

In practical situations the environmental state factor is not necessarily constant. Therefore we consider in this example an instance where it changes over time. First, we introduce a single change of the environmental state factor by switching the environmental state from 1 to 2 after the 10th time interval. We show that the Bayesian calibration approach still converges to the true environmental state factor over time. Next, we introduce an environmental state factor that changes more frequently over time by re-sampling the environmental state each observation period at random. In this case we apply the calibration method outlined in Section \ref{subsec_non_stationary} to find the true environmental state factor distribution. 
The model setup in this example is the same as in Example \ref{ex_Loisel}. 

\vb

Figure \ref{fig_Switch} shows the results on the estimated $\hat{p}^m$ and allocated initial reserves in case the environmental state factor switches from state 1 to 2 after 10 time intervals of length 1. The Bayesian calibration approach converges towards the new environmental state ($P=2$) over time, see Figure \ref{fig_Switch}(a).

\begin{figure}[H]
\centering
      \includegraphics[width=1.0\textwidth]{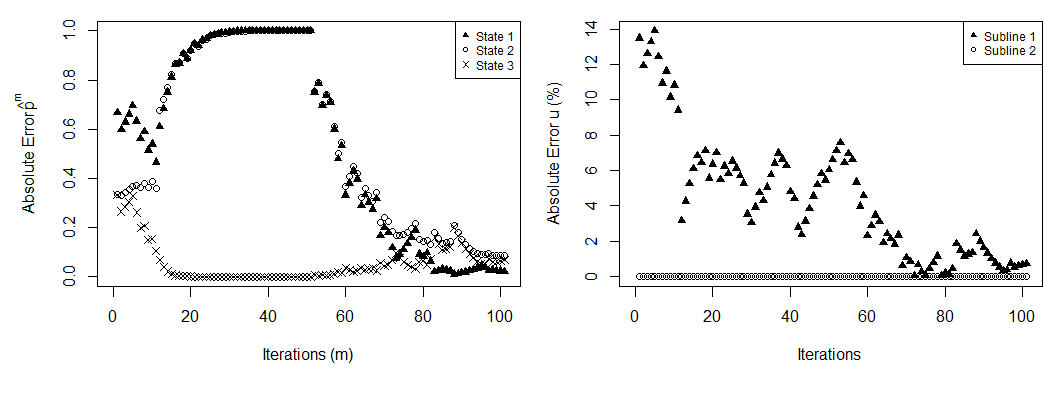}
  \caption{Absolute error of estimate environmental process probabilities $\hat{p}^m=(\hat{p}_1^m,\hat{p}_2^m,\hat{p}_3^m)$ with respect to the true environmental state in case of a switch from $P=1$ to $P=2$ after $t_m=10$ and relative error of allocated $\hat{u}$ with $\mu=(1,1,1,1,1,1)$ and $\lambda=(0.5,0.6,0.7,0.6,0.92,0.6)$.}
\label{fig_Switch}
\end{figure}

Your objective might entail the fast convergence towards the true environmental state or the earlier detection of a changing environmental state. In some cases the updating procedure may have converged towards the environmental state and a change in the environmental factor cannot be detected. Introducing a weighting function $h_w(\cdot):(0,1)\rightarrow\mathbb{R}$ over the previous probability estimates $\hat{p}^{m-1}$ in formula \eqref{eq_Bayes_Iter} may improve the updating procedure. Dependent on your own objective it may increase or decrease the convergence towards the true environmental state factor.
A straightforward example is the power-function: $h_w(\hat{p}_j^{m-1})=(\hat{p}_j^{m-1})^w$ for some fixed constant $w$. In Figure \ref{fig_weight_par} we show the impact of this weighting function on the convergence of the estimated environmental state probabilities $\hat{p}^m$ in case of a switch after 10 time intervals, as before. When choosing $w>1$, high probabilities carry more weight than in the case of no weighting. For a time-independent environmental state this would result in faster convergence towards the true state $P=1$ and subsequently slower adaptation to a potential switch in environment. For $w<1$, the convergence of the environmental state probabilities $\hat{p}^m$ towards the true state $P=1$ is slower than in case of no weighting. However, due to the slower convergence, the new state $P=2$ is recognized faster. Depending on how fast one wants to recognize a new environment state, one might choose a specific weighting function.\\

\begin{figure}[H]
\centering
      \includegraphics[width=1.0\textwidth,height=400pt]{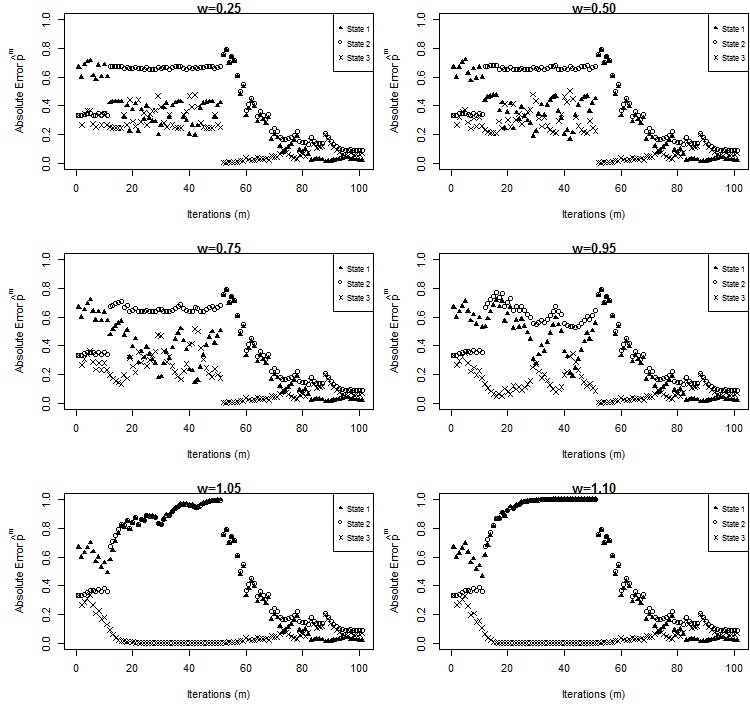}
  \caption{Impact of weighting function $h_w(\hat{p}_j^m)=(\hat{p}_j^m)^w$ on the absolute error of estimate environmental factor probabilities and relative error of allocated capital reserves as a function of $w$ in case of a single switch in environmental state from $P=1$ to $P=2$ at $t_m=10$. Parameters coincide with Figure \ref{fig_Switch}.}
\label{fig_weight_par}
\end{figure}

Next, we re-sample the environmental state factor each observation period (length 1) from the true distribution $p=(1/3,1/3,1/3)$. In this case the Bayesian calibration approach cannot be applied and we make use of the calibration approach outlined in Section \ref{subsec_non_stationary}. Figure \ref{fig_UN_update_max} shows that the calibration converges towards the true environmental state distribution. Furthermore, the initial capital reserves retrieved by solving optimization \eqref{risk_measure} differ very little from the capital reserves allocated using the true environmental state factor distribution. 

The parameters used in the example are given by 
$$\mu=\begin{pmatrix}
  1 & 0.65 & 0.4 \\
  1 & 0.65 & 0.4 
 \end{pmatrix} \ \rm{and} \ \lambda=\begin{pmatrix}
  0.50 & 0.70 & 0.92 \\
  0.92 & 0.70 & 0.50
 \end{pmatrix},$$
 where the $(i,j)$-th element in the matrices corresponds with $\mu_{ij}$ and $\lambda_{ij}$, respectively.

\begin{figure}[H]
\centering
      \includegraphics[width=1.0\textwidth]{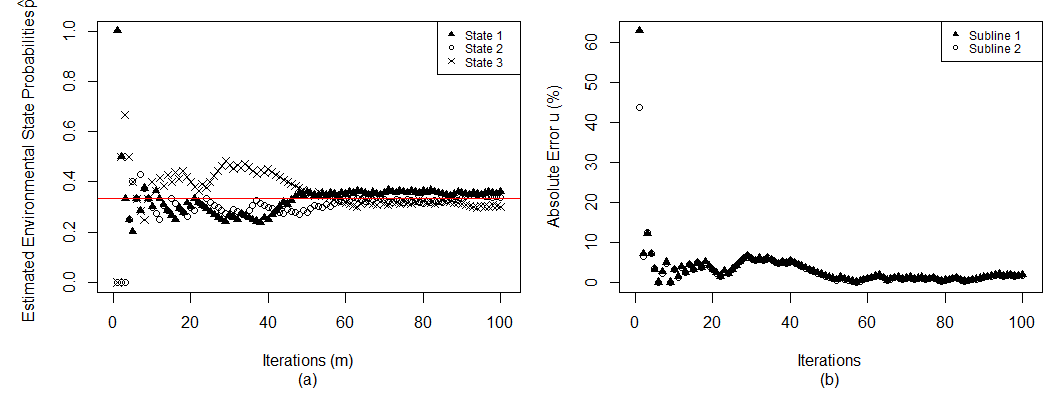}
  \caption{Estimated environmental factor probabilities $\hat{p}^m=(\hat{p}_1^m,\hat{p}_2^m,\hat{p}_3^m)$ over time and relative error of initial capital reserves $\hat{u}$ using the calibration approach in Section \ref{subsec_non_stationary}.}
\label{fig_UN_update_max}
\end{figure}
\end{example}

\begin{example}\textit{(Non-exponential claim size distributions)}\label{ex_Gaussian}

This example relaxes the assumption of exponentially distributed claims by allowing for other claim size distributions, thereby granting the model more flexibility. In the insurance context the capital level has negative jumps (claims). By allowing negative claim sizes, the model could be used for firms that have uncertain incoming cash flows (due to derivative investments for example). In this example we assume a Gaussian distribution for these claim sizes, i.e. $C^i\sim \mathcal{N}(\mu_{i},\sigma_i)$.

No explicit expression exists for the finite time ruin probability for a risk process with Gaussian distributed claims and therefore we use Arfwedson's approximation to estimate these probabilities, see Appendix \ref{appA}. 	
The Bayesian calibration approach for the environmental state distribution outlined in formula \eqref{eq_Bayes_Iter} is then given by:\begin{align*}
\hat{p}^{m}_j=&\frac{\hat{p}^{m-1}_{j}\prod_{i=1}^n e^{-\lambda_{ij}\bar{t}_m}\lambda_{ij}^{y^m_i}f(Z_i^m=z_i^m|Y_i^m=y_i^m,P=j)}{\sum_{k=1}^J \hat{p}^{m-1}_{k}\prod_{i=1}^n e^{-\lambda_{ik}\bar{t}_m}\lambda_{ik}^{y^m_i}f(Z_i^m=z_i^m|Y_i^m=y_i^m,P=k)}\\
=&\hat{p}^{m-1}_{j}\frac{\prod_{i=1}^n e^{-\lambda_{ij}\bar{t}_m}\lambda_{ij}^{y^m_i}\frac{1}{\sigma_{ij}}e^{-\frac{1}{2\sigma_{ij}^2}\sum_{l=1}^{y_i^m} (z_{il}^m-\mu_{ij})^2}}{\sum_{k=1}^J \hat{p}^{m-1}_{k}\prod_{i=1}^n e^{-\lambda_{ik}\bar{t}_m}\lambda_{ik}^{y^m_i}\frac{1}{\sigma_{ik}}e^{-\frac{1}{2\sigma_{ik}^2}\sum_{l=1}^{y_i^m} (z_{il}^m-\mu_{ik})^2}}.
\end{align*}

Our aim is to allocate capital reserves over 5 different business lines within a firm when there are 5 different states of the environment. 
Consider $r_i=1$, $\mu_{i}=1$, $\sigma_i=1$, for all business lines $i$ and introduce environmental state dependence on the claim intensity by setting $$\lambda=\begin{pmatrix}
  0.709 & 0.544 & 0.609 & 0.536 & 0.580 \\
  0.611 & 0.537 & 0.588 & 0.541 & 0.725 \\
  0.730 & 0.601 & 0.636 & 0.620 & 0.691  \\
  0.639 & 0.605 & 0.638 & 0.713 & 0.591 \\
  0.637 & 0.615 & 0.600 & 0.623 & 0.740 
 \end{pmatrix}.$$
 
Figure \ref{fig_Normal_p} shows the results for the estimates $\hat{p}^m$. The true environmental state used in this example is $P=1$ with observation periods of length $1$ and prior distribution $\hat{p}^0=\left(\frac{1}{5},\frac{1}{5},\frac{1}{5},\frac{1}{5},\frac{1}{5}\right)$. The figure shows the fast convergence towards the true environmental state. In general, we observe faster convergence of the Bayesian calibration approach when there are more business lines due to the fact that we then have more observations each observation period (one for each business line).

\begin{figure}[H]
\centering
      \includegraphics[scale=0.7]{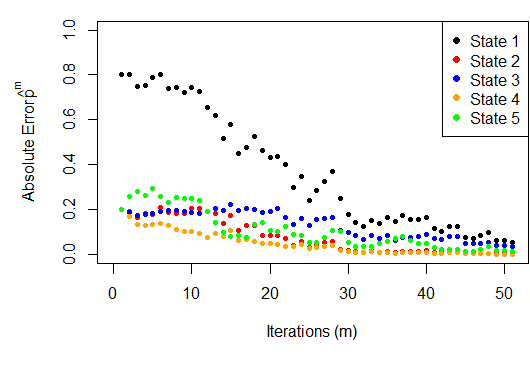}
  \caption{Convergence of absolute error of estimate environmental factor probabilities $\hat{p}^m=(\hat{p}_1^m,\hat{p}_2^m,\hat{p}_3^m,\hat{p}_4^m,\hat{p}_5^m)$ with respect to the true environmental state $P=1$.}
\label{fig_Normal_p}
\end{figure}

Capital reserves are allocated by solving two different minimization problems: minimization problem \eqref{risk_measure} with constraints on the ruin probability of all business lines in a subset, $\pi_m$ as in Equation \eqref{eq_pi}, and problem \eqref{risk_measure} with constraints on the probability of ruin of at least one business line in the subset, $\bar{\pi}_m$ as in Equation \eqref{eq_pi2}. The results are depicted in Figure \ref{fig_Normal_u}(a) and \ref{fig_Normal_u}(b), respectively. (In line with the previous example we have used constraints $\delta_m=0.001^{|S_m|}$.) Constraining the probability of ruin of all business lines in a subset, the optimal allocated initial capital reserves $u$ (in case the environmental state is known) are given by $(20.620,14.553,22.507,15.985,15.869)$. Putting a constraint on the probability of ruin of at least one business line in a subset leads to optimal allocated initial capital reserves $u$ of $(116.173, 150.040, 119.690, 83.281, 108.053)$.

\begin{figure}[H]
\centering
      \includegraphics[width=1.0\textwidth]{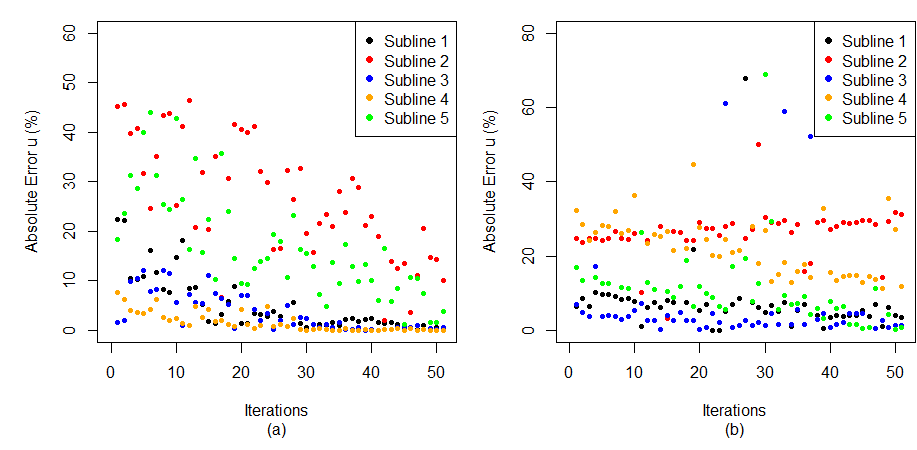}
  \caption{Convergence of the relative error of allocated $\hat{u}$ using the estimated environmental state distribution $\hat{p}^m$ with respect to the allocated capital reserve using the true environmental state factor $P=1$.}
\label{fig_Normal_u}
\end{figure}

\end{example}

\begin{remark}
During our numerical study we have made some general observations concerning the calibration of the environmental state factor and the optimization of the initial capital reserves. These observations include the faster convergence when defining more business lines. This property also applied when there is a more pronounced impact of the environmental state  on the claims intensity and claim size. As shown in Example \ref{ex_exp}, assuming exponentially distributed claims the Bayesian calibration procedure only requires the total number of claims and claim sizes (sum of all claim sizes) per business line for each observation period. We do not need the exact size or timing of each individual claim. 

Furthermore, Figure \ref{fig_Convexity} shows that the ruin probabilities of the business lines under each environmental state, calculated using Proposition \ref{prop_exp}, are convex in $u$.
By definition, we then have a convex optimization problem (\ref{risk_measure}) and a global minimum in $u$ must satisfy the Karush-Kuhn-Tucker (KKT) conditions.

\begin{figure}[H]
\centering
      \includegraphics[scale=0.4]{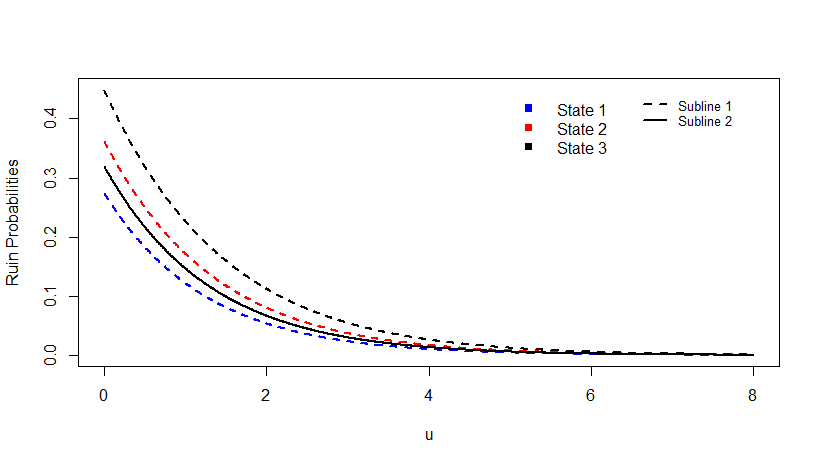}
  \caption{Ruin probabilities as a function of the initial reserve $u$ for $\mu=(1,1,1,1,1,1)$ and $\lambda=(0.5,0.6,0.7,0.6,0.92,0.6)$.}
\label{fig_Convexity}
\end{figure}
\end{remark}

\section{Conclusion and outlook}\label{sec_conclusion}

A multi-dimensional insurance risk model has been introduced for the purpose of allocating capital reserves across different lines of business within a firm. The individual risk process of each business line is given by the Cram\'{e}r-Lundberg model. To model dependence between different business lines, we have introduced a common environmental factor.
Due to the unobservable nature of this factor, we have presented a novel Bayesian approach to calibrate the latent environmental state distribution based on the claim processes and adapted the approach for an environmental state factor that is re-sampled each observation period. The convergence of these calibration approaches towards the true environmental state distribution has been deduced from known results.
Appropriate initial capital reserves are found by solving a constraint optimization problem. Allocation of the capital reserves over the business lines follows as a result from the optimization itself. Numerical examples illustrating the capital allocation technique and Bayesian calibration of the environmental state factor have been presented. We did not only elaborate on the applicability of the derived capital updating procedure but also discussed possible ways of extending the procedure. This includes the possible use of a weighting function to improve the updating procedure.

\vb
We have considered an environmental factor changing over time by re-sampling the factor each observation period. While it is difficult to predict when a change in environment might occur, the environmental state factor is unlikely to be re-sampled (independently) each observation period.
This would argue in favour of a Markov environmental factor in which the time spent in an environmental state is exponentially distributed. Under this assumption, the current setup becomes increasingly more complicated and one would most likely have to resort to numerical approaches to sample the multivariate risk process, similar to the works performed by Loisel et al.\ \cite{Loisel2004_unpublished,Loisel2007a_unpublished,Loisel2007b}.
This area of interest is marked for future research.

\section{References}
\bibliography{References}

\newpage
\begin{appendices}
\section{ }\label{appA}
\begin{proposition}{\bf (Arfwedson Approximation)}\\\label{cor_Arfwedson}
For $u_i>0$ define $\alpha_{i}$ and $\beta_{i}$ as the solution to:
$$\kappa_{i}'(\alpha_{i})=\frac{u_i}{T}, \ \ \beta_{i}=\alpha_{i}-\frac{T}{u_i}\kappa_{i}(\alpha_{i})$$
and let $\tilde{\alpha}_{i}<\alpha_{i}$ denote the solution of $\kappa_{i}(\tilde{\alpha})=\kappa_{i}(\alpha_{i})$. Then,
\begin{enumerate}
\item If $r_i>\lambda_{i}\mathbb{E}[C^i]$, then
$$\phi_i(u_i,T)\sim
\begin{cases}
  \tilde{K}_{i}e^{-\beta_{i}u_i}, & \text{for } T<u_i/\kappa '_{i}(\gamma_{i}) \\
     \frac{K_{i}}{2}e^{-\gamma_{i}u_i}, & \text{for } T=u_i/\kappa '_{i}(\gamma_{i}) \\
    K_{i}e^{-\gamma_{i}u_i}+\tilde{K}_{i}e^{-\beta_{i}u_i}, & \text{for }T>u_i/\kappa '_{i}(\gamma_{i})
  \end{cases}, \ \ \ \ u_i\rightarrow\infty$$
\item If $r_i<\lambda_{i}\mathbb{E}[C^i]$, then
$$\phi_i(u_i,T)\sim
\begin{cases}
  \tilde{K}_{i}e^{-\beta_{i}u_i}, & \text{for } T<u_i/\kappa '_{i}(0) \\
     \frac{\alpha_{i}}{2\tilde{\alpha}_{i}}, & \text{for } T=u_i/\kappa '_{i}(0) \\
    \frac{\alpha_{i}}{\tilde{\alpha}_{i}}+\tilde{K}_{i}e^{-\beta_{i}u_i}, & \text{for }T>u_i/\kappa '_{i}(0)
  \end{cases}, \ \ \ \ u_i\rightarrow\infty$$
 \item If $r_i=\lambda_{i}\mathbb{E}[C^i]$, then
 $$\phi_i(u_i,T)\sim \tilde{K}_{i}e^{-\beta_{i}u_i},$$
where
$$K_{i}:=\frac{r_i-\lambda_{i}\mathbb{E}[C^i]}{\lambda_{i}\hat{B'_{i}}[\gamma_{i}]-r_i}, \ \ \ \tilde{K}_{i}:= -\frac{\alpha_{i}-\tilde{\alpha}_{i}}{\alpha_{i}\tilde{\alpha}_{i}\sqrt{2\pi T \lambda_{i}\hat{B''_{i}}[\alpha_{i}]}}.$$

\end{enumerate}
\end{proposition}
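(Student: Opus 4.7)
The plan is to prove this classical saddle-point estimate via an exponential change of measure (Esscher transform) combined with a local central limit theorem, following the approach of Asmussen--Albrecher. The starting point is that, writing $R_i(t) := \sum_{k=1}^{N_i(t)} C_k^i - r_i t$ for the net-loss process, the cumulant identity $\log \mathbb{E}[e^{s R_i(t)}] = t \kappa_i(s)$ makes $M_t^\alpha := \exp\bigl(\alpha R_i(t) - t \kappa_i(\alpha)\bigr)$ a mean-one martingale. Defining the tilted measure $\mathrm{d}\mathbb{P}_\alpha/\mathrm{d}\mathbb{P}\big|_{\mathcal F_t} = M_t^\alpha$, the process $R_i(\cdot)$ remains compound Poisson under $\mathbb{P}_\alpha$, with Esscher-tilted claim density $\hat B_i[\alpha]^{-1} e^{\alpha x}\,\mathrm{d}F_i(x)$, intensity $\lambda_i \hat B_i[\alpha]$, and drift $\kappa_i'(\alpha)$.

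Second, writing the ruin time $\tau_u := \inf\{t \geq 0 : R_i(t) \geq u_i\}$, inversion of the likelihood ratio yields
\begin{equation*}
\phi_i(u_i,T) \;=\; \mathbb{E}_\alpha\bigl[\exp\bigl(-\alpha R_i(\tau_u) + \tau_u\,\kappa_i(\alpha)\bigr)\,\mathbf{1}\{\tau_u \leq T\}\bigr].
\end{equation*}
The defining equations $\kappa_i'(\alpha_i) = u_i/T$ and $\beta_i = \alpha_i - (T/u_i)\kappa_i(\alpha_i)$ are exactly the saddle-point conditions: under $\mathbb{P}_{\alpha_i}$ the process has drift $u_i/T$ so that $\tau_u$ concentrates near $T$, and the exponent simplifies to $\alpha_i u_i - T \kappa_i(\alpha_i) = \beta_i u_i$. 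Applying a Stone-type local limit theorem to $\tau_u$ under $\mathbb{P}_{\alpha_i}$ then produces the Gaussian prefactor $(2\pi T \lambda_i \hat B_i''[\alpha_i])^{-1/2}$; the combinatorial constant $-(\alpha_i - \tilde\alpha_i)/(\alpha_i \tilde\alpha_i)$ inside $\tilde K_i$ emerges from a renewal-theoretic analysis of the Esscher-tilted ascending ladder heights, with $\tilde\alpha_i$ --- the other root of $\kappa_i(s) = \kappa_i(\alpha_i)$ --- governing the decay of the companion reflected contribution.

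Third, the three sub-cases of part (1) result from comparing the saddle $\alpha_i$ with the Lundberg exponent $\gamma_i$: since $\kappa_i'$ is increasing, the condition $T \lessgtr u_i/\kappa_i'(\gamma_i)$ corresponds to $\alpha_i \gtrless \gamma_i$. If $T < u_i/\kappa_i'(\gamma_i)$ then $\alpha_i > \gamma_i$ and only the saddle contribution $\tilde K_i e^{-\beta_i u_i}$ survives. If $T > u_i/\kappa_i'(\gamma_i)$ then $\alpha_i < \gamma_i$ and one may additionally tilt at $\gamma_i$ to extract the classical Cram\'er--Lundberg contribution $K_i e^{-\gamma_i u_i}$, with renewal-overshoot constant $K_i = (r_i - \lambda_i \mathbb{E}[C^i])/(\lambda_i \hat B_i'[\gamma_i] - r_i)$; the saddle correction near $t = T$ persists as a subdominant term. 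At the boundary $T = u_i/\kappa_i'(\gamma_i)$ the Gaussian integration is truncated to a half-line, giving the factor $1/2$. Case (2) is analogous but, in the net-loss regime, there is no positive Lundberg root; the role of $\gamma_i$ is played by $0$ and the long-run ruin probability $\alpha_i/\tilde\alpha_i$ replaces $K_i e^{-\gamma_i u_i}$. Case (3) is the critical interpolation.

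The main obstacle is making the local CLT uniform with the exact constants: one must control the joint law of $(\tau_u, R_i(\tau_u) - u_i)$ under $\mathbb{P}_{\alpha_i}$ with enough precision to isolate both the Gaussian factor and the overshoot factor simultaneously, and to glue the saddle contribution to the Cram\'er--Lundberg contribution cleanly in the transition $\alpha_i \to \gamma_i$. A useful consistency check is that in case (1) with $T \to \infty$ one has $\alpha_i \downarrow \gamma_i$ and $\beta_i \to \gamma_i$, so the formula reduces to the standard Cram\'er--Lundberg asymptotic $\phi_i(u_i,\infty) \sim K_i e^{-\gamma_i u_i}$.
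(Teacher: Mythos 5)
The paper does not prove this proposition from first principles at all: its proof is a two-line citation to Arfwedson's original paper (Scheme I, p.~78) for the finite-horizon saddle-point terms, combined with the classical Cram\'er--Lundberg asymptotic, which supplies the constant $K_i$ for the infinite-horizon contribution. Your plan is therefore a genuinely different route: a self-contained probabilistic reconstruction via Esscher tilting and a local limit theorem, in the spirit of Asmussen--Albrecher. The skeleton you describe is correct and is indeed the mechanism behind Arfwedson's result: the martingale $M_t^\alpha$, the identity $\phi_i(u_i,T)=\mathbb{E}_{\alpha}\bigl[\exp\bigl(-\alpha R_i(\tau_u)+\tau_u\kappa_i(\alpha)\bigr)\mathbf{1}\{\tau_u\le T\}\bigr]$, the saddle condition $\kappa_i'(\alpha_i)=u_i/T$ turning the exponent into $\beta_i u_i$, and the trichotomy via $\alpha_i\gtrless\gamma_i$ (equivalently $T\lessgtr u_i/\kappa_i'(\gamma_i)$), with the Cram\'er--Lundberg term appearing exactly when $\alpha_i<\gamma_i$.

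However, as a proof the attempt has genuine gaps precisely where Arfwedson's paper does the work. The prefactor $\tilde K_i$ is not derived: the Gaussian factor $(2\pi T\lambda_i\hat B_i''[\alpha_i])^{-1/2}$ is plausibly a local-CLT output, but the factor $-(\alpha_i-\tilde\alpha_i)/(\alpha_i\tilde\alpha_i)$ is only asserted to ``emerge from a renewal-theoretic analysis''. Obtaining it requires the joint asymptotics of $(\tau_u,\,R_i(\tau_u)-u_i)$ under $\mathbb{P}_{\alpha_i}$ and the integration of $e^{\kappa_i(\alpha_i)(\tau_u-T)}$ over the edge of the Gaussian window; since this exponential varies on the same scale as the fluctuations of $\tau_u$, a plain local CLT does not suffice, and the conjugate root $\tilde\alpha_i$ enters exactly through that computation, which you do not carry out. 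Likewise the factor $1/2$ at $T=u_i/\kappa_i'(\gamma_i)$, the uniform gluing of the saddle and Cram\'er--Lundberg contributions as $\alpha_i\to\gamma_i$, and the constant $\alpha_i/\tilde\alpha_i$ in case~(2) are stated rather than proved --- you flag these yourself as the main obstacle. A minor point: your consistency check is misstated, since $\alpha_i\downarrow\gamma_i$ and $\beta_i\to\gamma_i$ occur when $u_i/T\downarrow\kappa_i'(\gamma_i)$, whereas if $T$ grows so fast that $u_i/T\to 0$ then $\alpha_i$ tends to the minimiser of $\kappa_i$ and $\beta_i u_i\to\infty$, which is how the correction term vanishes and Cram\'er--Lundberg is recovered. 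In short, your outline identifies the correct modern route and would, if completed, reprove the proposition independently of the literature, but as it stands it defers the quantitative core; the paper sidesteps all of this by invoking Arfwedson's result directly.
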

\begin{proof}
The proof follows immediately from Arfwedson's paper \cite{MR0074725} (Scheme I on page 78) and the Cram\'{e}r-Lundberg expression for infinite time ruin probabilities.
\end{proof}
In Figure \ref{fig_Arfwedson} we present the performance of the Arfwedson approximation for exponential claims with respect to the numerically evaluated integral expression presented in Proposition \ref{prop_exp} for various parameter sets.
It can be observed that the Arfwedson approximation improves in accuracy whenever the initial capital $u$ or the time horizon tends to be large, as expected.
\begin{figure}[H]
\centering
      \includegraphics[width=1.0\textwidth]{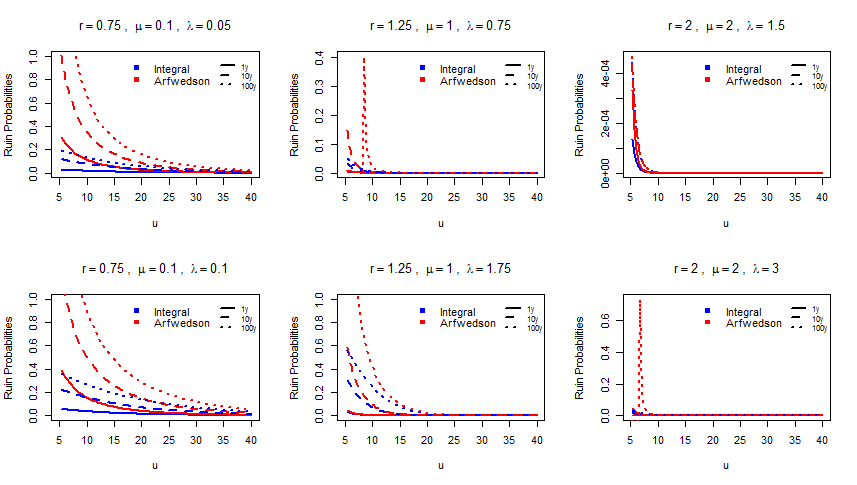}
  \caption{Ruin Probabilities calculated using Arfwedson's approximation and the numerical integral of Proposition \ref{prop_exp} for various parameter sets}
  \label{fig_Arfwedson}
\end{figure}

\end{appendices}

\end{document}